\newcommand{\R}{\mathbb{R}}
\newcommand{\F}{\mathcal{F}}
\newcommand{\Exp}{\mathbb{E}}
\newcommand{\pp}{\mathbb{P}}
\newcommand{\Q}{\mathbb{Q}}
\newtheorem{theorem}{Theorem}[section]
\newtheorem{proposition}[theorem]{Proposition}
\newtheorem{remark}[theorem]{Remark}
\providecommand{\keywords}[1]
{
  \small
  \textbf{\textbf{Keywords---}} #1
}
\title{Call Option Price using Pearson Diffusion Processes}
\author{Tapan Kar}
\address{Tapan Kar, Institute for Financial Management and Research - Graduate School of Business, Krea University, Sri City, Andhra Pradesh 524401, India.}
\email{tapan.kar@krea.edu.in}
\author{Suprio Bhar}
\address{Suprio Bhar, Department of Mathematics and Statistics, Indian Institute of Technology Kanpur, Kalyanpur, Kanpur - 208016, India.}
\email{suprio@iitk.ac.in}
\author{Barun Sarkar}
\address{Barun Sarkar, Department of Mathematics, Indian Institute of Technology Madras, Chennai - 600036, India.}
\email{barun@iitm.ac.in}
\author{Sesha Meka}
\address{Sesha Meka, Institute for Financial Management and Research - Graduate School of Business, Krea University, Sri City, Andhra Pradesh 524401, India.}
\email{sesha.meka@krea.edu.in}
\date{}
\begin{document}

\begin{abstract}
Following the foundational work of the Black--Scholes model, extensive research has been developed to price the option by addressing its underlying assumptions and associated pricing biases. This study introduces a novel framework for pricing European call options by modeling the underlying asset's return dynamics using Pearson diffusion processes, characterized by a linear drift and a quadratic squared diffusion coefficient. This class of diffusion processes offers a key advantage in its ability to capture the skewness and excess kurtosis of the return distribution, well-documented empirical features of financial returns. We also establish the validity of the risk-neutral measure by verifying the Novikov condition, thereby ensuring that the model does not admit arbitrage opportunities. Further, we study the existence of a unique strong solution of stock prices under the risk-neutral measure. We apply the proposed method to Nifty 50 index option data and conduct a comparative analysis against the classical Black--Scholes and Heston stochastic volatility models. Results indicate that our method shows superior performance compared to the other two benchmark models. These results carry practical implications for market participants, including market makers, hedge funds, and derivative traders.
\end{abstract}

\subjclass{91B25, 91G20, 60H10, 60H30}

\keywords{Call option, Pearson diffusion, Pearson type IV distribution, Novikov condition}

\maketitle

\section{Introduction}

Black and Scholes \cite{MR3363443} and Merton \cite{MR496534} developed option pricing models based on the assumption that the underlying asset price follows a geometric Brownian motion, implying that its log returns are normally distributed. However, empirical evidence shows that log returns deviate from normality, displaying skewness and excess kurtosis. Consequently, the option prices estimated by these models often deviate from observed market prices.

Researchers have proposed various alternative option pricing models to address the limitations of the Black-Scholes framework. An alternative to the Black-Scholes model is the constant elasticity of variance (CEV) approach, which generalizes geometric Brownian motion by allowing the diffusion coefficient to follow a power-law relationship with the stock price. This approach has been adopted by several researchers, including Cox \cite{cox1975notes,cox1996constant}, Cox and Ross \cite{cox1976valuation}, and Beckers \cite{beckers1980constant}, among others. Merton \cite{merton1976option} extended the Black-Scholes model by introducing a jump-diffusion framework incorporating a jump component into the asset price dynamics.
Heston \cite{MR3929676}  introduced the stochastic volatility model in which volatility itself is governed by a separate stochastic process. Madan \textit{et al.} \cite{madan1998variance} proposed the Variance Gamma option pricing model, driven by a stochastic process known as the Variance Gamma process, which is constructed by evaluating Brownian motion with drift at a random time given by a Gamma process. Cervellera and Tucci \cite{cervellera2017note} discussed challenges associated with parameter estimation for the Variance Gamma process. Borland \cite{borland2002theory} proposed an option pricing model based on a diffusion process described by a stochastic differential equation whose noise was driven by a non-Gaussian distribution, and the resulting distribution of log return process over time can be explicitly characterized by the Tsallis distribution \cite{tsallis1996anomalous}. However, Vellekoop and Nieuwenhuis \cite{vellekoop2007option} later argued that Borland’s model does not satisfy the Novikov condition, a key requirement for risk-neutral pricing, thereby allowing the possibility of arbitrage.
Additionally, Karagozoglu \cite{karagozoglu2022option} provided a good literature review on option pricing models.

In option pricing, researchers aim to model log returns using distributions that can capture the excess kurtosis and skewness of the return process. The Pearson Type IV distribution is well-suited for this task, as it accommodates a broad range of kurtosis and skewness depending on its shape parameter values. Nagahara \cite{MR1708093} provided a comprehensive analysis of this distribution, deriving its normalizing constant, the first four moments, and illustrating its applicability to modeling stock return distributions. Bhattacharyya \textit{et al.} \cite{bhattacharyya2009maxvar} employed the Pearson Type IV distribution to forecast Value at Risk for leptokurtic equity index returns. Further, Forman and Sørensen \cite{MR2446729} investigated the statistical properties of Pearson diffusion processes, a class of diffusion characterized by linear drift and a quadratic squared diffusion coefficient and showed that under specific conditions on the diffusion coefficient, the process follows Pearson Type IV distribution as its invariant distribution.

The main contribution of this paper is introducing a novel method for estimating call option prices by modeling log returns using Pearson diffusion processes, henceforth we refer to this method as the PIV model. In addition, we establish the validity of the Novikov condition within this framework, crucial for risk-neutral pricing, and study the existence of a unique strong solution for the stock price process under the risk-neutral measure. We apply the proposed method to Nifty 50 index option data and perform a comparative analysis against the classical Black–Scholes and Heston stochastic volatility models. The results demonstrate that our approach outperforms both benchmark models.

The rest of the paper is organized in the following way. Section \ref{pearson} describes the Pearson Type IV distribution. Section \ref{pearson-difSec} presents the Pearson diffusion processes. Section \ref{returnmodelP} outlines the modeling of log return processes. In Section \ref{equivMartMeas}, we derive the risk-neutral measure and present the proposed option pricing model. Section \ref{emperical} provides empirical evidence, and Section \ref{conclusion} concludes the paper.

\section{Pearson Type IV distribution}\label{pearson}

Karl Pearson \cite{pearson1895contributions} was the first to develop a generalized family of frequency curves designed to accommodate a wide range of empirical distributions encountered in practice. This comprehensive framework is now recognized as the Pearson system of curves. The Pearson system of curves satisfy the differential equation defined as:

\begin{equation}\label{ode1}
\frac{1}{f(x)}\, \frac{df(x)}{dx} = \frac{x - \alpha}{c_0 + c_1 x + c_2 x^2}
\end{equation}

A general solution of the above differential equation is given by
\begin{equation}\label{ode2}
f(x) = k \exp \left( \int \frac{x - \alpha}{c_0 + c_1 x + c_2 x^2} \, dx \right),
\end{equation}
where \( k \) is a constant.

The explicit form of the function  $f(x)$ is governed by the integral in the exponent of \eqref{ode2}. The evaluation of this integral, in turn, depends on the nature of the roots of the equation $c_0 + c_1 x + c_2 x^2 = 0$. The Pearson Type IV curve is obtained when the roots are imaginary, or when $c_1^2 < 4 c_0 c_2$.

The probability density function (PDF) of Pearson’s Type IV distribution is given by
\begin{equation}\label{pdf-pearson4}
f(x)= k \left[ 1 + \left( \frac{x - \lambda}{a} \right)^2 \right]^{-m} \exp\left[ \nu \tan^{-1}\left( \frac{x - \lambda}{a} \right) \right]  \quad -\infty < x < \infty,
\end{equation}
where $m > \frac{1}{2}$, $a>0$, and $m$, $a$, $\nu$ and $\lambda$ are real-valued parameters (functions of $\alpha$, $c_0$, $c_1$, and $c_2$) and $k$ is a normalization constant that depends on $m$, $a$, and $\nu$. Nagahara \cite{MR1708093} derived a closed-form expression of the normalization constant $k$.

The parameters $\nu$ and $m$ control skewness and kurtosis, respectively. Accordingly, the distribution is positively (negatively) skewed as $\nu$ is negative (positive). Increasing $m$ reduces the kurtosis. The scale parameter is $a$, and the location parameter is $\lambda$.  The Pearson Type IV distribution can capture a wide range of shapes depending on its parameter values.

\section{Pearson Diffusion Processes}\label{pearson-difSec}

Forman and Sørensen \cite{MR2446729} studied the statistical properties of the Pearson diffusion processes, which are defined by having a linear drift and a quadratic squared diffusion coefficient.

Let $\pp$ be a probability measure on a sample space $\Omega$. Pearson diffusion is a stationary solution to the following stochastic differential equation (SDE):
\begin{equation}\label{pearson-difforg}
dR_t = -\theta(R_t-\mu)\, dt +  \sqrt{2\theta (aR_t^2+bR_t+c)}\, dB_t
\end{equation}
where $\theta > 0$ and $a$, $b$, and $c$ are such that the square root term is well defined. Here $B_{t}, t \geq 0$ is a Brownian motion and $\mu$ denotes the mean of the invariant distribution.

As a particular case, when $a > 0$, $b = 0$, and $a = c$, the SDE \eqref{pearson-difforg} simplifies to the form:
\begin{equation}\label{pearson-diff2}
dR_t = -\theta(R_t-\mu)\, dt +  \sqrt{2\theta a(1+R_t^2)}\, dB_t
\end{equation}

Forman and Sørensen \cite{MR2446729} stated that when $\mu = 0$, the invariant distribution of the SDE \eqref{pearson-diff2} is a scaled  $t$-distribution with $m_{1} = 1 + a^{-1}$  degrees of freedom and a scale parameter $m_{1}^{-1/2}$. In the case where $\mu \neq 0$, the invariant distribution corresponds to the Pearson Type IV distribution, with $\mu$ controlling the skewness and the tails decaying at the same rate as those of a $t$-distribution with $1 + a^{-1}$ degrees of freedom. In both cases, the invariant distribution has moments of order $k_1$ for $k_1 < 1 + a^{-1}$.

%\section{Model of returns}

\section{Model of returns: Existence and uniqueness of  solution for the log return process}\label{returnmodelP}

Let the stock price dynamics be given by,
\[S_t:= S_0e^{R_t}, \, \forall t \geq 0,\]
with $S_0$ being the initial stock price and $R_t, t \geq 0$ be the $\log$ return process. Note that $R_t=\ln \frac{S_t}{S_0}$ and  $R_0=0$.

Let $R_t, t \geq 0$ follow Pearson diffusion process:
\begin{equation}\label{pearson-diff1}
dR_t = -\theta(R_t-\mu)\, dt + \sigma \sqrt{2\theta a(1+R_t^2)}\, dB_t,\, t \geq 0
\end{equation}
where $\sigma$ denotes the volatility of $\log$ returns. We use the parameter $\sigma$ for scaling the invariant distribution.

Next, we study the existence of strong global solution of \eqref{pearson-diff1}. For that we have to show the drift and diffusion coefficients are globally Lipschitz  \cite[Theorem 5.2.1]{MR2001996}. The drift and diffusion coefficients are given by
\[u(x):= -\theta(x-\mu),\ \ v(x):=\sigma \sqrt{2\theta a(1+x^2)},\, \forall x \in \R,\]
respectively. Note that,
\[|u(x) -u(y)| = \theta |x-y|, \, \forall x, y \in \R.\]
Again,
\begin{align*}
|v(x) -v(y)| & = \sigma\sqrt{2\theta a} \left| \sqrt{1+x^2} - \sqrt{1+y^2}\right| \\
& = \sigma\sqrt{2\theta a} \frac{|x^2-y^2|}{\left| \sqrt{1+x^2} + \sqrt{1+y^2}\right|} \\
& \leq \frac{\sigma\sqrt{2\theta a}\, |x+y|}{ \sqrt{1+x^2} + \sqrt{1+y^2}}\, |x-y| \\
& \leq \sigma\sqrt{2\theta a}\, |x-y|, \, \forall x, y \in \R.
\end{align*}
The last inequality follows from the estimate
\[|x+y|\leq |x|+|y|\leq   \sqrt{1+x^2} + \sqrt{1+y^2} .\]
Therefore there exists an unique strong solution of \eqref{pearson-diff1}. In particular by \cite[Theorem 5.2.1]{MR2001996},
\begin{equation}\label{Rt-integrability}
\pp\left( \int_0^T R_t^2\, dt<\infty\right) =1.
\end{equation}
We now derive the SDE governing the stock price using the It\^o's formula. Consider the function $f(x)=e^x, x \in \R$. Then, we have, a.s.,
\begin{align*}
dS_t & = S_0 d\left( e^{R_t}\right) \\
& = S_t \left\{ -\theta(R_t-\mu) +  \sigma^2\theta a \left(1+R_t^2 \right)\right\}\, dt + S_t \sigma \sqrt{2\theta a(1+R_t^2)}\, dB_t.
\end{align*}

\section{Equivalent local martingale measure}\label{equivMartMeas}

Consider a European-style call option with strike price $K$, time to maturity $T$, written on an underlying asset $S_t$, $0\leq t\leq T$. The payoff from the European call option at maturity is $\max(S_{T}-K,0)$. At a time $t < T$, using the information $I_{t}$ available up to time $t$ (more precisely, a filtration $\{I_t\}_{t\leq T}$) the fair market price of the call option denoted by $C(t)$ is equal to the discounted (under the risk-free interest rate $r$) expected payoff, and is given as:
\begin{equation}\label{defination_call}
C(t) = e^{-r(T-t)}\mathbb{E}^\mathbb{Q}[\max(S_T - K, 0)\mid I_t],
\end{equation}
where $\mathbb{E}^\mathbb{Q}$  denotes expectation under the risk-neutral probability measure $\mathbb{Q}$. The market completeness ensures the uniqueness of the risk-neutral measure (see \cite{constantinides2007option}).
The equality stated in \eqref{defination_call} holds provided that the discounted asset price process $e^{-rt} S_t,\ t\leq T$ is a martingale under $\mathbb{Q}$.  In this case, $\mathbb{Q}$ is called an equivalent martingale measure. The probability measure $\mathbb{Q}$ is called an equivalent local martingale measure if $e^{-rt}S_{t},\ t\leq T$ becomes a local martingale under $\mathbb{Q}$ (see \cite[Remark, page 165]{MR2001996}).

According to the asset pricing theory, to compute the expectation under the risk-neutral measure, we first convert $e^{-rt} S_t,\ t\leq T$ into a local martingale under the risk-neutral probability measure $\Q$. To identify $\mathbb{Q}$, we use the celebrated Girsanov's Theorem.

\begin{theorem}[Girsanov's Theorem]\cite[Theorem 8.6.6]{MR2001996}\label{Girsanov-theorem}
Let $Y_t, t \leq T$ be a real valued It\^o process of the form
\[dY_t = \beta(t)\, dt + \theta(t) \, dB_t;\, t \leq T.\]
Suppose there exist processes $u_t, t \leq T$ and $\alpha_t, t \leq T$ such that
\[\theta_t u_t = \beta_t - \alpha_t\]
and assume that $u_t, t\leq T$ satisfies the Novikov's condition
\[\Exp \left[\exp\left( \frac{1}{2} \int_0^T u^2_s\, ds   \right)\right] < \infty.\]
Put
\[M_t := \exp\left(- \int_0^t u_s \, dB_s - \frac{1}{2} \int_0^t u^2_s\, ds \right),\, t \leq T\]
and
$d \Q(\omega) = M_T (\omega) \, d\pp(\omega)$
on $\F_T$. Then,
\[\widehat B_t := \int_0^t u_s \, ds + B_t,\, t \leq T\]
is a Brownian motion with respect to $\Q$ and in terms of $\widehat B_t, t \leq T$, the process $Y_t, t \leq T$ has the stochastic integral representation
\[dY_t = \alpha(t)\, dt + \theta(t) \, d\widehat B_t;\, t \leq T.\]
\end{theorem}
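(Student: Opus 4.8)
The plan is to follow the classical route to Girsanov's theorem, built around Lévy's characterisation of Brownian motion together with the elementary fact that, with $M$ the candidate density process, a process $N$ is a $\Q$-(local) martingale if and only if $NM$ is a $\pp$-(local) martingale.

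First I would record that $M_t$ solves the linear SDE $dM_t = -u_t M_t\, dB_t$ with $M_0 = 1$ (immediate from It\^o's formula applied to the exponential), so that $M$ is automatically a nonnegative $\pp$-local martingale. The crucial upgrade is that, under Novikov's condition $\Exp[\exp(\tfrac12\int_0^T u_s^2\, ds)] < \infty$, the process $M$ is in fact a true martingale on $[0,T]$; in particular $\Exp[M_T] = M_0 = 1$, so $d\Q = M_T\, d\pp$ genuinely defines a probability measure on $\F_T$, and since $M_T > 0$ a.s., the measures $\Q$ and $\pp$ are equivalent there.

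Next I would show that $\widehat B_t = B_t + \int_0^t u_s\, ds$ is a $\Q$-Brownian motion by checking the two hypotheses of Lévy's characterisation under $\Q$. Continuity of $\widehat B$ is clear. For the martingale property I would apply the It\^o product rule to $\widehat B_t M_t$: using $dM_t = -u_t M_t\, dB_t$, $d\widehat B_t = u_t\, dt + dB_t$, and $d[\widehat B, M]_t = -u_t M_t\, dt$, the finite-variation terms cancel and one is left with $d(\widehat B_t M_t) = (M_t - u_t M_t \widehat B_t)\, dB_t$, a $\pp$-local martingale; hence $\widehat B$ is a $\Q$-local martingale. The same computation applied to $(\widehat B_t^2 - t) M_t$ shows $\widehat B_t^2 - t$ is a $\Q$-local martingale, i.e. the $\Q$-quadratic variation of $\widehat B$ is $t$. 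Lévy's theorem then gives that $\widehat B$ is a standard Brownian motion under $\Q$. Finally, substituting $dB_t = d\widehat B_t - u_t\, dt$ into $dY_t = \beta_t\, dt + \theta_t\, dB_t$ and using the standing relation $\theta_t u_t = \beta_t - \alpha_t$ yields $dY_t = (\beta_t - \theta_t u_t)\, dt + \theta_t\, d\widehat B_t = \alpha_t\, dt + \theta_t\, d\widehat B_t$, the claimed representation.

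The main obstacle is the step asserting that Novikov's condition forces $M$ to be a genuine martingale rather than merely a local martingale: a priori $M$, being a nonnegative local martingale, is only a supermartingale, and one must rule out loss of mass. The standard argument localises $M$, uses H\"older's inequality and the exponential structure to show the stopped densities are uniformly integrable --- this is precisely where the factor $\tfrac12$ in the exponent and the finiteness of $\Exp[\exp(\tfrac12\int_0^T u_s^2\, ds)]$ are exploited sharply --- and then passes to the limit. Everything after that point is routine stochastic calculus, which is why the subsequent sections of the paper are devoted to checking that the market price of risk arising from the Pearson-diffusion stock dynamics satisfies Novikov's condition.
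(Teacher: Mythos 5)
The paper does not prove this statement at all: it is quoted verbatim as Theorem 8.6.6 of \O ksendal's book and used as a black box, so there is no internal proof to compare against. Your sketch is the standard textbook argument --- and in fact essentially the route taken in the cited reference itself: $dM_t=-u_tM_tdB_t$ makes $M$ a nonnegative local martingale, Novikov upgrades it to a true martingale so that $\Q$ is a probability measure equivalent to $\pp$, the product-rule computations showing that $\widehat B_tM_t$ and $(\widehat B_t^2-t)M_t$ are $\pp$-local martingales combine with L\'evy's characterisation to make $\widehat B$ a $\Q$-Brownian motion, and the final substitution $dB_t=d\widehat B_t-u_t\,dt$ together with $\theta_tu_t=\beta_t-\alpha_t$ gives the stated representation. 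Your It\^o computations check out (the finite-variation terms do cancel in both products). The one step you only gesture at --- that Novikov's condition forces $\Exp[M_T]=1$ rather than merely $\Exp[M_T]\le 1$ --- is genuinely the hard part of the theorem and would need the full localisation/uniform-integrability argument to be self-contained, but identifying it as the crux and deferring to the standard lemma is appropriate here, since that is exactly the ingredient the paper later has to verify for its specific $u_t$ in Appendix A.
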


\begin{remark}
In Theorem \ref{Girsanov-theorem}, we require $M_t, t \leq T$ to be a $\pp$ martingale, specifically an exponential martingale, see \cite[Chapter III, Theorem 5.3]{MR1011252}.
\end{remark}

We now derive the SDE for the discounted price process $e^{-rt}S_{t},\, t\leq T$ under the original probability measure $\pp$. Consider the function $f(t,x):=e^{-rt}x, t \geq 0, x \in \R$. Then,
\[\frac{\partial f}{\partial t}(t,x)=-re^{-rt}x,\ \ \frac{\partial f}{\partial x}(t,x)=e^{-rt},\ \ \frac{\partial^2f}{\partial x^2}(t,x)=0, \, \forall t,x.\]
Therefore,
\begin{align}\label{discounted-price-under-P}
\begin{split}
d\left( e^{-rt} S_t\right) & = -re^{-rt} S_t\, dt + e^{-rt} S_t \left\{ -\theta(R_t-\mu) +  \sigma^2\theta a \left(1+R_t^2 \right)\right\}\, dt  \\
& \quad + e^{-rt} S_t \sigma \sqrt{2\theta a(1+R_t^2)}\, dB_t
\end{split}
\end{align}
Girsanov's Theorem \ref{Girsanov-theorem} proposed a new Brownian motion $\widehat B_t, t \leq T$ under $\Q$ is of the form
\[d\widehat{B_t}= u_t\, dt + dB_t, t \leq T.\]
Therefore, \eqref{discounted-price-under-P} can be written as:
\begin{align}\label{eq:discounted_process_drift_zero}
d\left( e^{-rt} S_t\right) & = -re^{-rt} S_t\, dt + e^{-rt} S_t \left\{ -\theta(R_t-\mu) +  \sigma^2\theta a \left(1+R_t^2 \right)\right\}\, dt  \notag \\
& \quad + e^{-rt} S_t \sigma \sqrt{2\theta a(1+R_t^2)}\, dB_t \notag \\
& = -re^{-rt} S_t\, dt + e^{-rt} S_t \left\{ -\theta(R_t-\mu) +  \sigma^2\theta a \left(1+R_t^2 \right)\right\}\, dt  \notag \\
& \quad + e^{-rt} S_t \sigma \sqrt{2\theta a(1+R_t^2)}\, \left(d\widehat{B_t} - u_t \, dt\right)
\end{align}
To ensure that the discounted price process $e^{-rt} S_t,\ t\leq T$ becomes a local martingale under $\mathbb{Q}$, we set the drift term of \eqref{eq:discounted_process_drift_zero} to zero and obtain:
\begin{equation}\label{expression-for-ut}
    u_t = \frac{-r  -\theta(R_t-\mu) +  \sigma^2\theta a \left(1+R_t^2 \right)}{ \sigma \sqrt{2\theta a(1+R_t^2)}}.
\end{equation}

\begin{theorem}\label{localmart-discountStck}
There exists a equivalent local martingale measure $\Q$ such that such that under $\Q$, the discounted price process $e^{-rt} S_t,\, t \leq T$ becomes a local martingale.
\end{theorem}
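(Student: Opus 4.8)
The plan is to construct $\Q$ via Girsanov's Theorem (Theorem~\ref{Girsanov-theorem}), taking the market price of risk to be exactly the process $u_t$ of \eqref{expression-for-ut}: the derivation of \eqref{eq:discounted_process_drift_zero} shows this is the unique choice killing the $dt$-part of $e^{-rt}S_t$, so the local martingale property will fall out once an equivalent measure with the corresponding density is in hand. The effort is thus concentrated in the hypotheses of Theorem~\ref{Girsanov-theorem}, and the exponential-integrability (Novikov) requirement on $u$ is where I expect the real difficulty.

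First I would check that the candidate density is well defined. In \eqref{expression-for-ut} the numerator of $u_t$ is a quadratic polynomial in $R_t$, while the square of the denominator is $2\sigma^2\theta a(1+R_t^2)$, again quadratic with strictly positive leading coefficient and bounded below by $2\sigma^2\theta a>0$; hence $u_t^2\le C(1+R_t^2)$ for a constant $C=C(r,\theta,\mu,\sigma,a)$. With \eqref{Rt-integrability} this gives $\pp\big(\int_0^T u_s^2\,ds<\infty\big)=1$, so the It\^o integral $\int_0^{\cdot}u_s\,dB_s$ and the stochastic exponential
\[Z_t:=\exp\Big(-\int_0^t u_s\,dB_s-\tfrac12\int_0^t u_s^2\,ds\Big),\qquad t\le T,\]
make sense; $Z$ is then a strictly positive continuous local $\pp$-martingale with $Z_0=1$, hence a supermartingale.

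Since the full Novikov condition $\Exp\big[\exp(\tfrac12\int_0^T u_s^2\,ds)\big]<\infty$ is not at hand ($u_t$ is of order $|R_t|$ and $R$ has heavy, Pearson~IV-type tails, so $\int_0^T u_s^2\,ds$ need not be exponentially integrable), I would localize. Put $\tau_n:=\inf\{t\in[0,T]:|R_t|\ge n\}\wedge T$; as $R$ is non-explosive (its coefficients are globally Lipschitz, Section~\ref{returnmodelP}), $\tau_n=T$ for all $n$ large, $\pp$-a.s. On $[0,\tau_n]$ the truncated integrand $u^{(n)}_t:=u_t\mathbf{1}_{\{t\le\tau_n\}}$ is bounded, so Novikov is trivial for $u^{(n)}$; Theorem~\ref{Girsanov-theorem} then gives that $Z^{\tau_n}$ is a true $\pp$-martingale with $\Exp[Z_{\tau_n}]=1$, that $\Q_n:=Z_{\tau_n}\cdot\pp$ is a probability measure equivalent to $\pp$ on $\F_T$ under which $\widehat B^{(n)}_t:=\int_0^t u^{(n)}_s\,ds+B_t$ is a Brownian motion, and (the drift of $e^{-r\cdot}S_\cdot$ cancelling on $[0,\tau_n]$ by the choice of $u$, the integrand there being bounded) that $e^{-r(t\wedge\tau_n)}S_{t\wedge\tau_n}$ is a genuine $\Q_n$-martingale. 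The $(\Q_n)$ are consistent ($\Q_m=\Q_n$ on $\F_{\tau_n}$ for $m\ge n$, because $Z^{\tau_m}$ is a martingale) and $(\tau_n)$ reduces $Z$, so $Z$ is a genuine local $\pp$-martingale.

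The step I expect to be the main obstacle is the upgrade from this to a single equivalent \emph{measure}: one needs $Z$ to be a true $\pp$-martingale, i.e.\ $\Exp[Z_T]=1$, equivalently the localized densities $\{Z_{\tau_n}\}_n$ to be uniformly $\pp$-integrable. The natural route is through the law of $R$ after the change of measure: by Girsanov, $R$ solves $dR_t=\big(r-\sigma^2\theta a(1+R_t^2)\big)\,dt+\sigma\sqrt{2\theta a(1+R_t^2)}\,d\widehat B_t$ on $[0,\tau_n]$, and $Z$ is a true martingale precisely when this diffusion does not reach the boundary $R=-\infty$ (equivalently $S=0$) before $T$; one would rule that out by a Feller-type test or a Lyapunov function for $\mathcal{L}g=\big(r-\sigma^2\theta a(1+x^2)\big)g'+\sigma^2\theta a(1+x^2)g''$, and the quadratic growth of this drift means the behaviour at the boundary $-\infty$ must be treated with care. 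Granting the true-martingale property, put $\Q:=Z_T\cdot\pp$ on $\F_T$; this is a probability measure, and $\Q\sim\pp$ because $Z_T>0$ $\pp$-a.s. Under $\Q$, $\widehat B_t:=\int_0^t u_s\,ds+B_t$ is a Brownian motion, and substituting $dB_t=d\widehat B_t-u_t\,dt$ in \eqref{eq:discounted_process_drift_zero} and cancelling the $dt$-terms via \eqref{expression-for-ut} leaves
\[d\big(e^{-rt}S_t\big)=e^{-rt}S_t\,\sigma\sqrt{2\theta a(1+R_t^2)}\,d\widehat B_t,\qquad t\le T,\]
a stochastic integral against a $\Q$-Brownian motion with continuous, hence a.s.\ square-integrable, integrand; so $e^{-rt}S_t$, $t\le T$, is a $\Q$-local martingale, and $\Q$ is an equivalent local martingale measure.
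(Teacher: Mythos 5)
Your setup is sound and matches the paper's: you derive the same bound $u_t^2\le C(1+R_t^2)$ from \eqref{expression-for-ut}, combine it with \eqref{Rt-integrability} to get $\pp\big(\int_0^T u_s^2\,ds<\infty\big)=1$, and correctly observe that the global Novikov condition $\Exp\big[\exp\big(\tfrac12\int_0^T u_s^2\,ds\big)\big]<\infty$ is not available, so that some localization is forced. But your proof has a genuine gap exactly where you flag it: you write ``Granting the true-martingale property, put $\Q:=Z_T\cdot\pp$,'' which is to assume the one statement that actually needs proving. Everything before that point only yields that $Z$ is a nonnegative local martingale, hence a supermartingale with $\Exp[Z_T]\le 1$; without $\Exp[Z_T]=1$ the measure $\Q$ you define need not be a probability measure and the theorem is not established. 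Your proposed route to close the gap (a Feller/Lyapunov non-explosion test for $R$ under the candidate measure, whose drift $r-\sigma^2\theta a(1+x^2)$ is strongly negative and quadratic) is a legitimate strategy in principle, but you do not carry it out, and you yourself note that the boundary behaviour at $-\infty$ ``must be treated with care'' --- so the argument is left open at its crux.

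The paper closes this gap differently and more elementarily, following \cite[Theorem 2.1]{MR2446690}: set $\chi_N(t)=1$ if $\int_0^t u_s^2\,ds\le N$ and $0$ otherwise, so the truncated integrand $\chi_N u$ satisfies Novikov trivially and
\[
M^N_t:=\exp\Big(-\int_0^t \chi_N(s)u_s\,dB_s-\tfrac12\int_0^t (\chi_N(s)u_s)^2\,ds\Big)
\]
is a true $\mathcal{L}^2$ exponential martingale with $\Exp[M^N_T]=1$. Since $\int_0^t u_s^2\,ds$ is nondecreasing, on the event $\{\chi_N(T)=1\}$ one has $M^N_T=M_T$, whence
\[
1=\Exp[M^N_T]=\Exp[\chi_N(T)M_T]+\pp(\chi_N(T)=0),
\]
and letting $N\to\infty$, monotone convergence together with $\pp\big(\int_0^T u_s^2\,ds<\infty\big)=1$ gives $\Exp[M_T]=1$, i.e.\ $M$ is a $\pp$-martingale and $\Q=M_T\cdot\pp$ is a genuine equivalent probability measure. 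Your stopping-time localization $\tau_n=\inf\{t:|R_t|\ge n\}\wedge T$ plays the same role as the paper's $\chi_N$, but the limiting argument that converts the localized densities into the statement $\Exp[Z_T]=1$ is precisely what is missing from your write-up; to complete your proof you would need either to reproduce this truncation-and-limit argument or to actually carry out the non-explosion test you sketch.
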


\begin{proof} For the proof, see Appendix \ref{App}.
\end{proof}

\subsection{Black-Scholes type PDE}

Let $F:[0, \infty) \times \R \to \R$ be $C^{1, 2}$ regular. Here, $C^{1, 2}$ denotes the space of two variable functions, once continuously differentiable in the first variable and twice continuously differentiable in the second variable. For any derivative price at time $t$ given by $F(t, S_t)$, by It\^o's formula,
\begin{align*}
& dF(t,S_t)\\
& = \frac{\partial F}{\partial t}(t,S_t)\, dt + \frac{\partial F}{\partial S}(t,S_t)\, dS_t + \frac{1}{2}\, \frac{\partial^2 F}{\partial S^2}(t,S_t)\, d[S,S]_t \\
& = \frac{\partial F}{\partial t}(t,S_t)\, dt\\
& \quad + \frac{\partial F}{\partial S}(t,S_t)\, \left\{ S_t \left\{ -\theta(R_t-\mu) +  \sigma^2\theta a \left(1+R_t^2 \right)\right\}\, dt + S_t \sigma \sqrt{2\theta a(1+R_t^2)}\, dB_t\right\} \\
& \quad + \frac{1}{2}\, \frac{\partial^2 F}{\partial S^2}(t,S_t)\, \sigma^2 S_t^2 \left( 2\theta a(1+R_t^2)\right)\, dt\\
& = \left\{ \frac{\partial F}{\partial t}(t,S_t) + \frac{\partial F}{\partial S}(t,S_t)\, \left[ S_t \left\{ -\theta(R_t-\mu) +  \sigma^2\theta a \left(1+R_t^2 \right)\right\}  \right] + \frac{1}{2}\, \frac{\partial^2 F}{\partial S^2}(t,S_t)\, \sigma^2 S_t^2 \left( 2\theta a(1+R_t^2)\right)\right\}\, dt\\
& \quad +  \frac{\partial F}{\partial S}(t,S_t)\, S_t \sigma \sqrt{2\theta a(1+R_t^2)}\, dB_t
\end{align*}
For the discounted derivative price, using $d\widehat{B_t}$ from Theorem \ref{Girsanov-theorem} and $u_t$ from \eqref{expression-for-ut}, we obtain
\begin{align*}
& d \left( e^{-rt} F(t,S_t)\right) \\
& = -re^{-rt}  F(t,S_t)\, dt + e^{-rt} dF(t,S_t) \\
& = e^{-rt} \Big\{ -r F(t,S_t) +  \frac{\partial F}{\partial t}(t,S_t) + \frac{\partial F}{\partial S}(t,S_t)\, \left[ S_t \left\{ -\theta(R_t-\mu) +  \sigma^2\theta a \left(1+R_t^2 \right)\right\}  \right] \\
& \hspace{2cm} + \frac{1}{2}\, \frac{\partial^2 F}{\partial S^2}(t,S_t)\, \sigma^2 S_t^2 \left( 2\theta a(1+R_t^2)\right) \Big\}\, dt \\
& \quad + e^{-rt} \frac{\partial F}{\partial S}(t,S_t)\, S_t \sigma \sqrt{2\theta a(1+R_t^2)}\, dB_t \\
& = e^{-rt} \Big\{ -r F(t,S_t) +  \frac{\partial F}{\partial t}(t,S_t) + \frac{\partial F}{\partial S}(t,S_t)\, \left[ S_t \left\{ -\theta(R_t-\mu) +  \sigma^2\theta a \left(1+R_t^2 \right)\right\}  \right] \\
& \hspace{2cm} + \frac{1}{2}\, \frac{\partial^2 F}{\partial S^2}(t,S_t)\, \sigma^2 S_t^2 \left( 2\theta a(1+R_t^2)\right) \Big\}\, dt \\
& \quad + e^{-rt} \frac{\partial F}{\partial S}(t,S_t)\, S_t \sigma \sqrt{2\theta a(1+R_t^2)}\, d\widehat{B_t} - e^{-rt} \frac{\partial F}{\partial S}(t,S_t)\, S_t \sigma \sqrt{2\theta a(1+R_t^2)}\,  u_t\,dt, \\
& = e^{-rt} \Big\{ -r F(t,S_t) +  \frac{\partial F}{\partial t}(t,S_t) + \frac{\partial F}{\partial S}(t,S_t)\, \left[ S_t \left\{ -\theta(R_t-\mu) +  \sigma^2\theta a \left(1+R_t^2 \right)\right\}  \right] \\
& \hspace{2cm} + \frac{1}{2}\, \frac{\partial^2 F}{\partial S^2}(t,S_t)\, \sigma^2 S_t^2 \left( 2\theta a(1+R_t^2)\right) \Big\}\, dt \\
& \quad + e^{-rt} \frac{\partial F}{\partial S}(t,S_t)\, S_t \sigma \sqrt{2\theta a(1+R_t^2)}\, d\widehat{B_t} \\
& \quad - e^{-rt} \frac{\partial F}{\partial S}(t,S_t)\, S_t \sigma \sqrt{2\theta a(1+R_t^2)}\, \left(\frac{-r  -\theta(R_t-\mu) +  \sigma^2\theta a \left(1+R_t^2 \right)}{ \sigma \sqrt{2\theta a(1+R_t^2)}}\right)\, dt \\
& = e^{-rt} \Big\{ -r F(t,S_t) +  \frac{\partial F}{\partial t}(t,S_t) + rS_t \frac{\partial F}{\partial S}(t,S_t) +  \frac{\partial^2 F}{\partial S^2}(t,S_t)\, \sigma^2 S_t^2 \left( \theta a(1+R_t^2)\right) \Big\}\, dt  \\
&\quad + e^{-rt} \frac{\partial F}{\partial S}(t,S_t)\, S_t \sigma \sqrt{2\theta a(1+R_t^2)}\, d\widehat{B_t}.
\end{align*}
Equating the drift term to be zero and converting it to the deterministic condition, we end up with a Black-Scholes type PDE
\begin{equation}\label{PIVPDE}
    \frac{\partial F}{\partial t}(t,x) + rx \frac{\partial F}{\partial x}(t,x) + \sigma^2 x^2 \left( \theta a\left(1+\left(\ln \frac{x}{S_0}\right)^2 \right)\right)\,  \frac{\partial^2 F}{\partial x^2}(t,x) = r F(t,x).
\end{equation}
The above discussion is formulated in the following proposition.
\begin{proposition}
Assume that the log return follows Pearson diffusion process \eqref{pearson-diff1}. Then, the Black-Scholes type PDE for derivative pricing is \eqref{PIVPDE}.
\end{proposition}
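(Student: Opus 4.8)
The plan is to retrace, rigorously, the It\^o-calculus computation displayed above and then to argue that the vanishing of the drift is exactly the condition that makes the discounted claim price a local martingale under $\Q$. First I would fix a $C^{1,2}$ function $F$ and posit that the time-$t$ value of the derivative with terminal payoff $F(T,S_T)$ equals $F(t,S_t)$. Applying It\^o's formula to $F(t,S_t)$, using the SDE for $S_t$ obtained at the end of Section \ref{returnmodelP}, and then the product rule to $e^{-rt}F(t,S_t)$, produces the decomposition above into a $dt$-part and a $dB_t$-part. I would then invoke Theorem \ref{localmart-discountStck}, which supplies an equivalent local martingale measure $\Q$ and, via Theorem \ref{Girsanov-theorem}, a $\Q$-Brownian motion $\widehat B_t$ with $dB_t = d\widehat B_t - u_t\,dt$, where $u_t$ is given by \eqref{expression-for-ut}; substituting this into the $dB_t$-part rewrites $d(e^{-rt}F(t,S_t))$ with a new drift and the martingale term driven by $d\widehat B_t$.

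The decisive simplification is the cancellation inside the new drift. The first-order term $e^{-rt}\frac{\partial F}{\partial S}(t,S_t)\,S_t\{-\theta(R_t-\mu)+\sigma^2\theta a(1+R_t^2)\}\,dt$ coming from It\^o's formula is, because of the explicit form \eqref{expression-for-ut} of $u_t$, cancelled up to the residual $rS_t\frac{\partial F}{\partial S}(t,S_t)\,dt$ by the Girsanov correction $-e^{-rt}\frac{\partial F}{\partial S}(t,S_t)\,S_t\,\sigma\sqrt{2\theta a(1+R_t^2)}\,u_t\,dt$; all $R_t$-dependence in the first-order part disappears, while the second-order term $\sigma^2 S_t^2\,\theta a(1+R_t^2)\frac{\partial^2 F}{\partial S^2}(t,S_t)\,dt$ survives (the factor $1/2$ combining with the $2$ in the diffusion coefficient). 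This gives
\begin{align*}
d\left(e^{-rt}F(t,S_t)\right)
&= e^{-rt}\Big\{-rF(t,S_t)+\frac{\partial F}{\partial t}(t,S_t)+rS_t\frac{\partial F}{\partial S}(t,S_t)
+\sigma^2 S_t^2\,\theta a\big(1+R_t^2\big)\frac{\partial^2 F}{\partial S^2}(t,S_t)\Big\}\,dt \\
&\quad + e^{-rt}\frac{\partial F}{\partial S}(t,S_t)\,S_t\,\sigma\sqrt{2\theta a(1+R_t^2)}\,d\widehat B_t .
\end{align*}
The stochastic-integral term is a continuous $\Q$-local martingale, its integrand being adapted and, by \eqref{Rt-integrability} together with path-continuity of $S$, locally square-integrable. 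Hence $e^{-rt}F(t,S_t)$ is a $\Q$-local martingale if and only if its continuous finite-variation part is zero, i.e.\ the $dt$-coefficient above vanishes identically along the trajectory of $S$. Substituting $R_t=\ln(S_t/S_0)$ and then replacing $S_t$ by a generic $x\in(0,\infty)$ — legitimate since the diffusion coefficient of $R$ is bounded away from $0$, so $R$ has full support on $\R$ and therefore $S$ on $(0,\infty)$ — turns this pathwise identity into the pointwise equation \eqref{PIVPDE} on $[0,\infty)\times(0,\infty)$.

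Almost everything here is routine It\^o bookkeeping already exhibited in the excerpt; the one point that needs care is the ``drift $\equiv 0$'' characterization. For that I would use that a continuous semimartingale that is a local martingale has vanishing bounded-variation component, and I would check that the $d\widehat B_t$-integrand is such that the stochastic integral is a genuine $\Q$-local martingale — both following from the integrability \eqref{Rt-integrability} and the localization already used in the proof of Theorem \ref{localmart-discountStck}. I expect no further obstacle.
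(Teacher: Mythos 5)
Your proposal is correct and follows essentially the same route as the paper: the paper's argument is precisely the displayed It\^o/Girsanov computation for $d\bigl(e^{-rt}F(t,S_t)\bigr)$, the cancellation of the $R_t$-dependent first-order drift against the $u_t$-correction leaving $rS_t\,\partial F/\partial S$, and the setting of the resulting drift to zero with $R_t=\ln(S_t/S_0)$. The only difference is that you supply details the paper leaves implicit (the uniqueness of the semimartingale decomposition behind ``drift $\equiv 0$'' and the full-support argument for converting the pathwise identity into a pointwise PDE), which is a welcome but not essentially different elaboration.
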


\subsection{Existence and Uniqueness of stock price under \texorpdfstring{$\Q$}{Q}}

Under $\Q$, the log return process satisfies
\begin{align}\label{Rtquadeqn}
 \begin{split}
 dR_t & = -\theta(R_t-\mu)\, dt + \sigma \sqrt{2\theta a(1+R_t^2)}\, dB_t \\
 & = -\theta(R_t-\mu)\, dt + \sigma \sqrt{2\theta a(1+R_t^2)}\, \left(d\widehat{B_t} - u_t\, dt\right) \\
 & = -\theta(R_t-\mu)\, dt + \sigma \sqrt{2\theta a(1+R_t^2)}\, d\widehat{B_t} \\
 & \quad - \sigma \sqrt{2\theta a(1+R_t^2)}\,  \left(\frac{-r  -\theta(R_t-\mu) +  \sigma^2\theta a \left(1+R_t^2 \right)}{ \sigma \sqrt{2\theta a(1+R_t^2)}}\right)\, dt \\
 & = \left( r - \sigma^2\theta a \left(1+R_t^2 \right)\right)\, dt + \sigma \sqrt{2\theta a(1+R_t^2)}\, d\widehat{B_t},
\end{split}
\end{align}
Let $\Pi(x)$ be the probability density function corresponding to the steady state distribution of \eqref{Rtquadeqn} under the probability measure $\mathbb{Q}$. Then $\Pi(x)$ satisfies the  Fokker-Planck equation defined as:
\[ \frac{\partial^2}{\partial x^2} \left\{\frac{1}{2} \sigma^22\theta a\left(1+x^2\right) \Pi(x) \right\} = \frac{\partial}{\partial x} \left\{ \left( r - \sigma^2\theta a \left(1+x^2 \right)\right)\Pi(x) \right\},\]
simplifying which we get,
\begin{equation}\label{inv-dist1}
\left(1+x^2\right) \Pi^{\prime\prime}(x) + \left\{ 4x - \frac{r}{\sigma^2\theta a} + \left(1+x^2 \right) \right\}\Pi^\prime(x) + 2 (1+x)\Pi(x) = 0.
\end{equation}
One can get the steady state distribution of \eqref{Rtquadeqn} by solving the above differential equation \eqref{inv-dist1}. Here, we are unable to obtain any closed form solution of \eqref{inv-dist1}.

Next we discuss the stock price dynamics under $\mathbb{Q}$,
\begin{align}\label{StinQ}
\begin{split}
 & dS_t \\
 & = S_t \left\{ -\theta(R_t-\mu) +  \sigma^2\theta a \left(1+R_t^2 \right)\right\}\, dt + S_t \sigma \sqrt{2\theta a(1+R_t^2)}\, dB_t \\
& = S_t \left\{ -\theta(R_t-\mu) +  \sigma^2\theta a \left(1+R_t^2 \right)\right\}\, dt + S_t \sigma \sqrt{2\theta a(1+R_t^2)}\, \left(d\widehat{B_t} - u_t\, dt\right) \\
& = S_t \left\{ -\theta(R_t-\mu) +  \sigma^2\theta a \left(1+R_t^2 \right)\right\}\, dt + S_t \sigma \sqrt{2\theta a(1+R_t^2)}\, d\widehat{B_t} \\
& \quad - S_t \sigma \sqrt{2\theta a(1+R_t^2)}\, \left(\frac{-r  -\theta(R_t-\mu) +  \sigma^2\theta a \left(1+R_t^2 \right)}{ \sigma \sqrt{2\theta a(1+R_t^2)}}\right)\, dt \\
& = rS_t\, dt + S_t \sigma \sqrt{2\theta a(1+R_t^2)}\, d\widehat{B_t} \\
& = rS_t\, dt + S_t \sigma \sqrt{2\theta a\left(1+ \left(\ln \frac{S_t}{S_0}\right)^2 \right)}\, d\widehat{B_t} .
\end{split}
\end{align}
By Girsanov's Theorem, the SDE \eqref{StinQ} under $\mathbb{Q}$ has a weak solution. In the next theorem, we establish the existence and uniqueness of strong solution of this equation.
\begin{theorem}
The SDE \eqref{StinQ} has a unique strong solution.
\end{theorem}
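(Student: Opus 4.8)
The plan is to reduce the statement to the log-return equation \eqref{Rtquadeqn}, combine the weak existence already furnished by Girsanov's theorem with pathwise uniqueness, and invoke the Yamada--Watanabe theorem to upgrade to a unique strong solution. Since $x\mapsto S_0e^x$ is a $C^2$-diffeomorphism of $\R$ onto $(0,\infty)$ with inverse $y\mapsto\ln(y/S_0)$, It\^o's formula shows that a continuous $(0,\infty)$-valued process $(S_t)_{t\le T}$ solves \eqref{StinQ} (with respect to $\widehat B$ under $\Q$) if and only if $R_t:=\ln(S_t/S_0)$ solves \eqref{Rtquadeqn} with respect to $\widehat B$; conversely, if $(R_t)$ solves \eqref{Rtquadeqn} then $S_t:=S_0e^{R_t}$ solves \eqref{StinQ}, and the two formulations generate the same augmented filtration, so strong solutions correspond to strong solutions. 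Hence it suffices to prove that \eqref{Rtquadeqn} has a pathwise-unique strong solution on $[0,T]$.

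For existence, note that the process $R$ built in Section \ref{returnmodelP} --- the $\pp$-strong solution of \eqref{pearson-diff1} --- is, under $\Q$ and relative to the $\Q$-Brownian motion $\widehat B_t=\int_0^t u_s\,ds+B_t$, a weak solution of \eqref{Rtquadeqn}: this is exactly the manipulation displayed in \eqref{Rtquadeqn}, legitimate because $\widehat B$ is a $\Q$-Brownian motion by Theorem \ref{localmart-discountStck} and Theorem \ref{Girsanov-theorem}. This weak solution is non-exploding on $[0,T]$, since $R$ has continuous real-valued paths $\pp$-a.s., hence $\Q$-a.s.\ as $\Q\sim\pp$, so $\sup_{t\le T}|R_t|<\infty$ a.s. For pathwise uniqueness, recall that the diffusion coefficient $v(x)=\sigma\sqrt{2\theta a(1+x^2)}$ of \eqref{Rtquadeqn} is globally Lipschitz (Section \ref{returnmodelP}) while the drift $\widetilde b(x):=r-\sigma^2\theta a(1+x^2)$ is locally Lipschitz. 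If $X,X'$ both solve \eqref{Rtquadeqn} on a common basis with the same Brownian motion and $X_0=X_0'=0$, set $\tau_n:=\inf\{t\ge0:|X_t|\vee|X_t'|\ge n\}$; on $[0,\tau_n]$ the coefficients agree with Lipschitz functions (with constant $L_n$), and the standard estimate via It\^o's isometry, the Burkholder--Davis--Gundy inequality and Gr\"onwall's lemma gives $\Exp\big[\sup_{s\le t\wedge\tau_n}|X_s-X_s'|^2\big]=0$ for every $t\le T$; letting $n\uparrow\infty$ and using non-explosion on $[0,T]$ yields $X\equiv X'$ on $[0,T]$ a.s. (this also rules out any exploding solution, since it would have to agree with the non-exploding reference solution up to its blow-up time, which is impossible by continuity).

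Combining the weak solution with pathwise uniqueness, the Yamada--Watanabe theorem produces a strong solution of \eqref{Rtquadeqn} that is unique in law and pathwise unique; transporting it back via $S_t=S_0e^{R_t}$ gives the unique strong solution of \eqref{StinQ}. The main obstacle is the downward-quadratic drift $\widetilde b$ of \eqref{Rtquadeqn} (equivalently, the super-linear diffusion coefficient $x\mapsto\sigma x\sqrt{2\theta a(1+(\ln(x/S_0))^2)}$ of \eqref{StinQ}): it is neither globally Lipschitz nor one-sided Lipschitz with linear growth, so \cite[Theorem 5.2.1]{MR2001996} cannot be applied directly as it was under $\pp$. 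Two points therefore need care --- (i) global, non-exploding existence on $[0,T]$, handled here by transporting the non-exploding $\pp$-solution of \eqref{pearson-diff1} across the equivalence $\Q\sim\pp$ (alternatively one could run Feller's test for explosions on \eqref{Rtquadeqn}, where the real work is computing the scale function and speed density and their asymptotics as $x\to\pm\infty$), and (ii) passing from the Girsanov \emph{weak} solution to a \emph{strong} one, for which the Yamada--Watanabe theorem, rather than a direct Picard iteration, is the appropriate tool.
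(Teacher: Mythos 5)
Your argument is correct, but it takes a genuinely different route from the paper's. The paper never leaves the $S$-equation \eqref{StinQ}: it computes $\hat\sigma'(x)$ for $\hat\sigma(x)=x\sigma\sqrt{2\theta a(1+(\ln (x/S_0))^2)}$, bounds it to get a local Lipschitz estimate via the Mean Value Theorem, notes the drift $rx$ is globally Lipschitz, invokes the Karatzas--Shreve local Lipschitz theorem for local existence and pathwise uniqueness of a strong solution, and then rules out explosion by checking a Khasminskii-type Lyapunov criterion (condition (3.13) of the cited reference), namely $x\hat b(x)-\tfrac12\hat\sigma^2(x)\le \left|r-\sigma^2\theta a\right| x^2\ln(x^2)$ for $x^2>e$. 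You instead pull everything back to the log-return equation \eqref{Rtquadeqn} through the diffeomorphism $x\mapsto S_0e^x$, manufacture a non-exploding weak solution for free by reading the $\pp$-strong solution of \eqref{pearson-diff1} under $\Q\sim\pp$ against the $\Q$-Brownian motion $\widehat B$, prove pathwise uniqueness by localization and Gr\"onwall (globally Lipschitz diffusion, locally Lipschitz drift), and finish with Yamada--Watanabe. Your route buys non-explosion essentially for free from the equivalence of measures and from work already done in Section \ref{returnmodelP}, and it treats the boundary $S=0$ (i.e.\ $R\to-\infty$) on the same footing as $S\to\infty$, which the paper's test for $x^2>e$ passes over; the cost is the extra machinery of Yamada--Watanabe to upgrade the weak solution to a strong one. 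The paper's route is more self-contained and direct on the $S$-equation but must carry out the explosion test explicitly. Both arguments are sound; just make sure, when you cite Yamada--Watanabe, that your pathwise uniqueness statement is phrased in the class of (possibly exploding) solutions --- which your parenthetical remark about an exploding solution having to agree with the non-exploding reference up to its blow-up time already supplies.
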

\begin{proof}
    Consider the diffusion coefficient -
\[\hat\sigma (x) := x\sigma \sqrt{2\theta a\left(1+ \left(\ln \frac{x}{S_0}\right)^2 \right)}, \, \forall x \in \R.\]
Then,
\begin{align*}
\hat\sigma^\prime(x) & =\sigma \sqrt{2\theta a\left(1+ \left(\ln \frac{x}{S_0}\right)^2 \right)} + x\, \sigma\, \frac{1}{2}\,  \frac{1}{\sqrt{2\theta a\left(1+ \left(\ln \frac{x}{S_0}\right)^2 \right)}}\, 2\left(\ln\frac{x}{S_0}\right)\, \frac{1}{x}\\
& = \sigma \sqrt{2\theta a\left(1+ \left(\ln \frac{x}{S_0}\right)^2 \right)} + \frac{\sigma\, \ln\frac{x}{S_0}}{\sqrt{2\theta a\left(1+ \left(\ln \frac{x}{S_0}\right)^2 \right)}},
\end{align*}
and consequently,
\[\left| \hat\sigma^\prime(x)\right| \leq \sigma \sqrt{2\theta a\left(1+ \left(\ln \frac{x}{S_0}\right)^2 \right)} + \frac{\sigma}{\sqrt{2\theta a}}.\]
Now, for any large $L > 0$, with $|x|,|y|\leq L$, by the Mean Value Theorem, we have
\begin{align*}
& \left| \hat\sigma (x) - \hat\sigma (y)\right| \\
& = \left| \hat\sigma^\prime(z)\right|\, |x-y|,\ \ \text{[for $z\in(x,y)\Rightarrow |z|\leq L$]}\\
& \leq \left( \sigma \sqrt{2\theta a\left(1+ \left(\ln \frac{L}{S_0}\right)^2 \right)} + \frac{\sigma}{\sqrt{2\theta a}}\right)\, |x-y|.
\end{align*}
Therefore, the diffusion coefficient is locally Lipschitz and the drift coefficient $\hat b(x) := rx, x \in \R$ is clearly globally Lipschitz. Therefore, from \cite[Chapter 5, Theorem 2.5]{MR1121940}, we have the local existence and uniqueness for strong solution of $S_t$.

Now, we verify condition (3.13) of  \cite{MR3131308}. For $x^2 > e$,
\begin{align*}
& r\hat b(x) - \frac{1}{2} \hat\sigma^2(x) \\
& = rx^2 - \frac{1}{2}\, x^2\sigma^2 2\theta a\left(1+ \left(\ln \frac{x}{S_0}\right)^2 \right) \\
& = (r - \sigma^2\theta a)\, x^2 - x^2\sigma^2 \theta a \left(\ln \frac{x}{S_0}\right)^2  \\
& \leq \left| r - \sigma^2\theta a\right|\, x^2 \\
& \leq \left| r - \sigma^2\theta a\right|\, x^2\, \ln (x^2),
\end{align*}
where we use the fact that $\ln (x^2)>1$. Therefore, we conclude that the solution is always regular i.e. non-explosive. This completes the proof.
\end{proof}

\begin{remark}
By the existence and uniqueness of the strong solution discussed above, we have the global solution of $S_t,\, t \leq T$. However, in our approach we have not been able to establish the integrability of $S_t$ under $\mathbb{Q}$. As a result, $e^{-rt} S_t, t \leq T$ becomes a local martingale under $\Q$ and we have not been able to establish the martingale property.
\end{remark}

\subsection{Estimation of Call option price}
In what follows, we assume that $e^{-rt} S_t, t \leq T$ is a $\Q$ martingale. The price of a call option at time
$t$, with maturity at $T$ ($t <T$), strike price $K$, and risk-free interest rate $r$ can be estimated as: \begin{equation}\label{call_option}
    C(t) = \Exp^{\mathbb{Q}} \left[ e^{-r(T-t)} \max\{S_T-K,0\} | I_t\right].
\end{equation}
In this study, we are not able to get any closed-form solution for the call option price. Therefore, we estimate the option price through a simulation approach. We first simulate the stock price processes $S_{t}, t \leq T$ using the stochastic differential equation \eqref{StinQ}, and then compute the call option price using \eqref{call_option}.
\begin{remark}
From Girsanov Theorem \ref{Girsanov-theorem}, the call option price at $t=0$,
\begin{align*}
 C(0) & = \int_{\Omega} e^{-rT} \max\{S_T-K,0\}\, d\mathbb{Q}(\omega) \\
& = \int_{\Omega} e^{-rT} \max\{S_T-K,0\}\, M_T(\omega)\, d\mathbb{P}(\omega) \\
& = \int_{\Omega} e^{-rT} \max\{S_T-K,0\}\, \exp\left( -\int_0^T u_s\, dB_s - \frac{1}{2}\int_0^T u^2_s\, ds \right) \, d\mathbb{P}.
\end{align*}
\end{remark}

\section{Empirical evidence}\label{emperical}

To empirically assess the effectiveness of the proposed method, we use daily data of European call options on Nifty 50 index. We select the Nifty 50 index because it is India's most actively traded derivative instrument on the National Stock Exchange (NSE). Our study covers April 11, 2022, to August 30, 2024, focusing on recent historical data. The data is sourced from the publicly available NSE website. The dataset has information about trade date, time to maturity, underlying asset price, strike price, trading volume, total turnover, and the open, high, low, and close prices for all the option contracts for the sample period. In our sample, we restrict the maturity time of options to a maximum of three months. More than three-month maturity option contracts on the NSE are less frequently traded and tend to exhibit lower liquidity.

A significant portion of the option contracts in our sample display low or negligible liquidity. To address this, we use turnover as a proxy for liquidity. Turnover is calculated as the product of contracts traded, lot size, and the underlying price, and is measured in lakhs of Indian rupees. We exclude contracts falling below the seventh decile of the turnover. After applying this filter, the final dataset consists of $62442$ observations, with the minimum turnover value of $28282.52$.  We use the yield on 91-day Treasury bills as the risk-free interest rate. We collect the interest rate data from Reserve Bank of India (RBI) website. The data and the R-codes are available on request.

We group the option data based on moneyness and time to maturity categories. Let \( S_{t} \) be the underlying asset price at time $t$ and \( K \) be the strike price. A call option is classified as at-the-money (ATM) if \( \frac{S_{t}}{K} \in (0.97, 1.03) \), out-of-the-money (OTM) if \( \frac{S_{t}}{K} \leq 0.97 \), and in-the-money (ITM) if \( \frac{S_{t}}{K} \geq 1.03 \) (see: Bakshi \textit{et al.} \cite{bakshi1997empirical}).

Based on time to maturity, option contracts are classified into five categories: A ($0$ day $<$ Time to maturity $\leq 7$ days); B ($7$ days $<$ Time to maturity $\leq 15$ days); C ($15$ days $<$ Time to;  maturity $\leq 30$ days; D ($30$ days $<$ Time to maturity $\leq 60$ days); and E ($60$ days $<$ Time to maturity $\leq 90$ days).

Tables 1 and 2 report the time-to-maturity
distribution and the moneyness distribution of the traded option contracts, respectively.

\begin{table}[H]
\centering
\caption{Time to maturity distribution of the option contracts.}
\label{tab:ttmdistribution}
\begin{tabular}{lccccc}
\toprule
& A & B & C & D & E \\
\midrule
Number of contracts & 32142 & 17507 & 9548 & 2965 & 280 \\
Percentage of contracts (\%) & 51.4750 & 28.0372 & 15.2910 & 4.7484 & 0.4484 \\
\bottomrule
\end{tabular}
\end{table}

\begin{table}[H]
\centering
\caption{Moneyness distribution of the option contracts.}
\label{tab:moneynessdistribution}
\begin{tabular}{lccc}
\toprule
& ATM & ITM & OTM \\
\midrule
Number of contracts & 31617 & 1131 & 29694 \\
Percentage of contracts (\%) & 50.6342 & 1.8113 & 47.5545 \\
\bottomrule
\end{tabular}
\end{table}

As shown in Table~1, the majority of the traded options ($94.8032\%$) have a time to maturity within categories A to C, while category E contains the smallest proportion of contracts.

Table~2 indicates that trading activity is more concentrated in at-the-money (ATM) and out-of-the-money (OTM) call options compared to in-the-money (ITM) call options.

\subsection{Model comparison using the historical approach}

We evaluate our model's performance against the classical Black–Scholes and Heston models. Under the Pearson diffusion framework, the statistical dynamics of log returns $R_{t}$ is governed by \eqref{pearson-diff1}, an SDE having four parameters: $(\theta, a, \mu, \sigma)$. Our first step is to estimate these parameters. In the historical approach, model parameters are estimated from historical log return data and then used to estimate the model option price. For the historical approach, we need to select an appropriate window size. As mentioned in Hull and Basu \cite{hull2016options}, the estimation window for historical volatility in the Black–Scholes model ranges from 30 to 180 trading days in practice. In this study, we select two rolling window lengths of 90 and 180 trading days. In other words, to estimate the model option price on a given day, we first collect the daily closing prices from the preceding 90 or 180 trading days, compute the corresponding log returns, estimate the model parameters, and then use these parameters to calculate the model option price. In the simulation study, call option prices are computed using $200000$ Monte Carlo replications under the proposed PIV method.

For all three models - PIV, Black–Scholes (BS), and Heston (HS) - parameters are estimated via the maximum likelihood method. In the estimation procedure, we set $dt=\frac{1}{252}$, reflecting the use of daily data assuming that the NSE operates 252 trading days per calendar year.

 \subsection{Out-of-sample performance of historical approach}

The out-of-sample dataset consists of $53310$ and $44680$ observations corresponding to the rolling window sizes of 90 and 180 days, respectively. The pricing error of an option is defined as the difference between the model price and the market price. We consider the closing price of a traded option to be the market price. To evaluate model performance, we employ two error metrics: mean absolute error (MAE) and mean squared error (MSE). Tables 3 and 4 display the performance of all models using rolling window sizes of 90 and 180 days, respectively, categorized by moneyness. Tables 5 and 6 show the corresponding results for the same window sizes, for the time to maturity category.

\begin{table}[H]
\centering
\caption{Model Performance under moneyness category by rolling window approach with size 90.}
\label{tab:moneyness_performance1}
\renewcommand{\arraystretch}{1.2}
\setlength{\tabcolsep}{10pt}
\begin{tabular}{lccc}
\hline
 & \textbf{PIV} & \textbf{BS} & \textbf{HS} \\
\hline
\multicolumn{4}{c}{\textbf{MAE}} \\
ITM & \textbf{36.9693} & 54.6638 & 44.7100 \\
OTM & \textbf{11.4605} & 34.4065 & 14.9941 \\
ATM & \textbf{36.2825} & 81.1552 & 40.6405 \\
ALL & \textbf{24.8806} & 59.1877 & 28.9196 \\
\hline
\multicolumn{4}{c}{\textbf{MSE}} \\
ITM & \textbf{3109.4387} & 7791.0597 & 4982.0560 \\
OTM & \textbf{819.6919} & 6542.1246 & 2399.6011 \\
ATM & \textbf{4018.4329} & 15329.0959 & 6137.0253 \\
ALL & \textbf{2531.3807} & 11154.6008 & 4397.8962 \\
\hline
\end{tabular}

\vspace{1em}
\noindent\parbox{0.9\linewidth}{\footnotesize \textit{Note:} This table reports the mean absolute error (MAE) and mean squared error (MSE) for all models. PIV, BS, and HS denote the Pearson diffusion, Black--Scholes, and Heston models, respectively. ITM, OTM, and ATM refer to in-the-money, out-of-the-money, and at-the-money options. ALL represents the entire out-of-sample dataset. Across the entire out-of-sample dataset, the proportions of ITM, OTM, and ATM options are $1.78\%$, $45.98\%$, and $52.24\%$, respectively. Bold values indicate the lowest MAE and MSE in each row.}
\end{table}

\begin{table}[H]
\centering
\caption{Model Performance under moneyness category by rolling window approach with size 180.}
\label{tab:moneyness_performance2}
\renewcommand{\arraystretch}{1.2}
\setlength{\tabcolsep}{10pt}
\begin{tabular}{lccc}
\hline
 & \textbf{PIV} & \textbf{BS} & \textbf{HS} \\
\hline
\multicolumn{4}{c}{\textbf{MAE}} \\
ITM & \textbf{38.6962} & 48.9129 & 47.8571 \\
OTM & \textbf{12.0293} & 29.5032 & 16.0753 \\
ATM & \textbf{38.2951} & 75.2467 & 43.0964 \\
ALL & \textbf{26.4330} & 54.0835 & 30.9744 \\
\hline
\multicolumn{4}{c}{\textbf{MSE}} \\
ITM & \textbf{3367.1278} & 5055.1497 & 5630.3482 \\
OTM & \textbf{854.5360} & 4231.0854 & 2685.0795 \\
ATM & \textbf{3972.1454} & 10881.1451 & 6967.7207 \\
ALL & \textbf{2551.9969} & 7767.2186 & 5007.4213 \\
\hline
\end{tabular}

\vspace{1em}
\noindent\parbox{0.9\linewidth}{\footnotesize \textit{Note:} This table reports the mean absolute error (MAE) and mean squared error (MSE) for all models. PIV, BS, and HS denote the Pearson diffusion, Black--Scholes, and Heston models, respectively. ITM, OTM, and ATM refer to in-the-money, out-of-the-money, and at-the-money options. ALL represents the entire out-of-sample dataset. Across the entire out-of-sample dataset, the proportions of ITM, OTM, and ATM options are $1.87\%$, $45.19\%$, and $52.94\%$, respectively. Bold values indicate the lowest MAE and MSE in each row.}
\end{table}

\begin{table}[H]
\centering
\caption{Model Performance under time-to-maturity category by rolling window approach with size 90.}
\label{tab:ttm_performance1}
\begin{tabular}{lccc}
\hline
 & \textbf{PIV} & \textbf{BS} & \textbf{HS} \\
\hline
\multicolumn{4}{c}{\textbf{MAE}} \\
A & \textbf{11.2749} & 32.0895 & 13.7173 \\
B & \textbf{28.5835} & 72.2835 & 35.5088 \\
C & \textbf{47.5368} & 97.9993 & 52.2150 \\
D & 61.6959 & 131.1378 & \textbf{60.4702} \\
E & \textbf{128.1301} & 158.6207 & 160.7986 \\
\hline
\multicolumn{4}{c}{\textbf{MSE}} \\
A & \textbf{689.8709} & 3466.8727 & 1180.3141 \\
B & \textbf{2464.4858} & 11985.3773 & 4908.5419 \\
C & \textbf{5872.2965} & 23112.1791 & 9431.0848 \\
D & \textbf{8866.9861} & 40282.0234 & 13405.8085 \\
E & \textbf{26497.2694} & 86574.5441 & 55562.8110 \\
\hline
\end{tabular}

\vspace{1em}
\noindent\parbox{0.9\linewidth}
{\footnotesize \textit{Note:} This table presents the mean absolute error (MAE) and mean squared error (MSE) for all models. PIV, BS, and HS represent the Pearson diffusion, Black--Scholes, and Heston models, respectively. Categories A through E classify options based on time to expiry as follows: A (7 days or less), B (more than 7 to 15 days), C (more than 15 to 30 days), D (more than 30 to 60 days), and E (more than 60 to 90 days). Across the entire out-of-sample dataset, the proportions of option contracts corresponding to time-to-maturity categories A, B, C, D, and E are \(50.78\%\), \(28.18\%\), \(15.72\%\), \(4.79\%\), and \(0.52\%\), respectively. Bold values highlight the lowest MAE and MSE for each row.}
\end{table}

\begin{table}[H]
\centering
\caption{Model Performance under time-to maturity category by rolling window approach with size 180.}
\label{tab:ttm_performance2}
\begin{tabular}{lccc}
\hline
 & \textbf{PIV} & \textbf{BS} & \textbf{HS} \\
\hline
\multicolumn{4}{c}{\textbf{MAE}} \\
A & \textbf{12.0051} & 28.1144 & 14.7708 \\
B & \textbf{30.6041} & 65.0758 & 38.5824 \\
C & \textbf{49.5831} & 90.8299 & 54.6517 \\
D & 65.4378 & 127.3506 & \textbf{62.6735} \\
E & \textbf{114.3168} & 148.2005 & 155.4543 \\
\hline
\multicolumn{4}{c}{\textbf{MSE}} \\
A & \textbf{729.9839} & 2429.8010 & 1371.3447 \\
B & \textbf{2468.0535} & 8254.1154 & 5687.6544 \\
C & \textbf{5686.9298} & 15902.2269 & 10627.6116 \\
D & \textbf{9194.8577} & 28308.5065 & 14623.8995 \\
E & \textbf{23047.8746} & 50836.7810 & 53501.0383 \\
\hline
\end{tabular}

\vspace{1em}
\noindent\parbox{0.9\linewidth}
{\footnotesize \textit{Note:} This table presents the mean absolute error (MAE) and mean squared error (MSE) for all models. PIV, BS, and HS represent the Pearson diffusion, Black--Scholes, and Heston models, respectively. Categories A through E classify options based on time to expiry as follows: A (7 days or less), B (more than 7 to 15 days), C (more than 15 to 30 days), D (more than 30 to 60 days), and E (more than 60 to 90 days). Across the entire out-of-sample dataset, the proportions of option contracts corresponding to time-to-maturity categories A, B, C, D, and E are \(50.43\%\), \(28.25\%\), \(15.75\%\), \(5.01\%\), and \(0.57\%\), respectively. Bold values highlight the lowest MAE and MSE for each row.}
\end{table}

Tables 3 and 4 show that the proposed method consistently outperforms the classical Black–Scholes model and the Heston stochastic volatility model across all moneyness categories, for both two error metrics and rolling window sizes. Likewise, Tables 5 and 6 indicate that the PIV method delivers superior performance across all time-to-maturity categories, under both error functions and window sizes, except for Category D (options with maturities between 30 and 60 days) under the MAE criterion.

Next, we apply the Diebold and Mariano \cite{diebold2002comparing} test to evaluate whether the differences in predictive accuracy between the competing models are statistically significant. The test is conducted using both absolute error (AE) and squared error (SE) as loss functions.
The null hypothesis of the test states that the forecasts from the PIV and HS (or BS) models have the same predictive accuracy, while the alternative hypothesis states that the PIV model forecasts are more accurate than the HS (or BS) model. Tables 7 and 8 report the p-values of the Diebold and Mariano test comparing the predictive performance of the PIV model with the HS and BS models, using rolling window sizes of 90 and 180 days, respectively.

Tables 7 and 8 indicate that the null hypothesis is rejected at the $1\%$ significance level across all moneyness categories, for both error functions and all rolling window sizes, providing statistical evidence that the forecasts produced by the proposed model are more accurate than those of the HS and BS models at $1\%$ significance level.

\begin{table}[H]
\centering
\caption{p-values of the Diebold-Mariano Test for rolling window size 90.}
\begin{tabular}{lcc}
\toprule
\multicolumn{3}{c}{\textbf{PIV vs HS}} \\
\midrule
         & \textbf{AE}       & \textbf{SE}       \\
ITM      & 4.30E$^{-10}$     & 6.46E$^{-08}$     \\
OTM      & 2.20E$^{-16}$     & 2.20E$^{-16}$     \\
ATM      & 2.20E$^{-16}$     & 2.20E$^{-16}$     \\
ALL      & 2.20E$^{-16}$     & 2.20E$^{-16}$     \\
\addlinespace
\multicolumn{3}{c}{\textbf{PIV vs BS}} \\
\midrule
         & \textbf{AE}       & \textbf{SE}       \\
ITM      & 2.20E$^{-16}$     & 1.76E$^{-11}$     \\
OTM      & 2.20E$^{-16}$     & 2.20E$^{-16}$     \\
ATM      & 2.20E$^{-16}$     & 2.20E$^{-16}$     \\
ALL      & 2.20E$^{-16}$     & 2.20E$^{-16}$     \\
\bottomrule
\end{tabular}

\vspace{1em}
\begin{minipage}{0.9\textwidth}
\footnotesize
\textit{Note:} This table presents the p-values from the Diebold and Mariano test, used to assess the predictive accuracy of the PIV model relative to the HS (PIV vs HS) and BS (PIV vs BS) models. ITM, OTM, and ATM refer to
in-the-money, out-of-the-money, and at-the-money options. ALL represents the
entire out-of-sample dataset. The AE and SE columns report the p-values of the test based on the Absolute Error and Squared Error loss functions, respectively.
\end{minipage}
\end{table}

\begin{table}[H]
\centering
\caption{p-values of the Diebold-Mariano Test for rolling window size 180.}
\begin{tabular}{lcc}
\toprule
\multicolumn{3}{c}{\textbf{PIV vs HS}} \\
\midrule
         & \textbf{AE}       & \textbf{SE}       \\
ITM      & 6.77E$^{-10}$     & 1.07E$^{-07}$     \\
OTM      & 2.20E$^{-16}$     & 2.20E$^{-16}$     \\
ATM      & 2.20E$^{-16}$     & 2.20E$^{-16}$     \\
ALL      & 2.20E$^{-16}$     & 2.20E$^{-16}$     \\
\addlinespace
\multicolumn{3}{c}{\textbf{PIV vs BS}} \\
\midrule
         & \textbf{AE}       & \textbf{SE}       \\
ITM      & 6.88E$^{-11}$     & 4.68E$^{-06}$     \\
OTM      & 2.20E$^{-16}$     & 2.20E$^{-16}$     \\
ATM      & 2.20E$^{-16}$     & 2.20E$^{-16}$     \\
ALL      & 2.20E$^{-16}$     & 2.20E$^{-16}$     \\
\bottomrule
\end{tabular}

\vspace{1em}
\begin{minipage}{0.9\textwidth}
\footnotesize
\textit{Note:} This table presents the p-values from the Diebold and Mariano test, used to assess the predictive accuracy of the PIV model relative to the HS (PIV vs HS) and BS (PIV vs BS) models. ITM, OTM, and ATM refer to in-the-money, out-of-the-money, and at-the-money options, respectively. ALL represents the entire out-of-sample dataset. The AE and SE columns report the p-values of the test based on the Absolute Error and Squared Error loss functions, respectively.
\end{minipage}
\end{table}

\subsection{Out-of-sample performance using Implied approach}

The historical approach requires selecting a window size to estimate option model parameters, which can have practical challenges due to its stringent requirements on historical data (Bakshi \textit{et al.} \cite{bakshi1997empirical}). To address this limitation, practitioners have traditionally opted to use the implied approach, where model parameters are estimated by inverting the option pricing model using recent market option prices. The implied parameters may be interpreted as the instant market estimate of the future price movement of the underlying asset. This method not only significantly reduces data requirements but has also been shown to improve performance (see: Bates \cite{bates1996jumps, bates1996testing}).

 For the case of the implied approach, we follow Bakshi \textit{et al.} to estimate the model parameters. \cite{bakshi1997empirical}. To estimate the call option price for the day $t$, we first collect all options traded on the preceding day. Let $C_{market,t-1}$  denote the observed market price and $C_{model,t-1}$ be model option price  both at time $t-1$. Let $\Theta$ be the vector of model parameters. The implied parameters of each model for day $t-1$ are estimated by minimizing the sum of squared pricing errors:

 \[
\hat{\Theta} = \underset{\Theta}{\min} \sum \left(C_{\text{market},t-1} - C_{\text{model},t-1}\right)^2
\] where the summation is taken over all options traded on day $t-1$. Next we utilize those estimated implied parameters to estimate the option price of day $t$. Under the implied approach, the out-of-sample dataset comprises a total of $62103$ observations.

Tables 9 and 10 report the out-of-sample forecast performance of all three models across moneyness and time-to-maturity categories, respectively. Table 11 presents the p-values of the Diebold and Mariano test, assessing whether the differences in predictive accuracy among the models are statistically significant.

Table 9 demonstrates that the PIV model outperforms the other models for at-the-money (ATM) options as well as across the entire out-of-sample dataset under both error metrics. In contrast, the HS model delivers better forecasts for in-the-money (ITM) and out-of-the-money (OTM) options. Given that over \(50\%\) of the options in both the full sample (Table 2) and the out-of-sample dataset (Table 9) fall within the ATM category, the superior performance of the proposed model in this segment is particularly significant.
Table 11 presents statistical evidence confirming that the PIV model's forecasts for ATM options are more accurate than those of the HS and BS models at $1\%$ significance level.

For data grouped by time-to-maturity (Table 10), the PIV method outperforms others in categories A and B under the MAE metric, which together account for over $79\%$ of traded options in both the full sample (Table 1) and the out-of-sample (Table 10) dataset. The proposed PIV model delivers superior performance in the groups with the highest trading activity, highlighting its effectiveness within the implied approach framework as well. However, under the MSE metric, the proposed method does not exhibit superior performance, the BS model performs better in categories B and C, while the HS model outperforms in the remaining three categories.

\begin{table}[H]
\centering
\caption{Model Performance under moneyness category by implied approach.}
\label{tab:moneynessperformance3}
\renewcommand{\arraystretch}{1.2}
\setlength{\tabcolsep}{10pt}
\begin{tabular}{lccc}
\hline
 & \textbf{PIV} & \textbf{BS} & \textbf{HS} \\
\hline
\multicolumn{4}{c}{\textbf{MAE}} \\
ITM & 27.5377 & 26.9758 & \textbf{26.1722} \\
OTM & 5.6576 & 5.4501 & \textbf{4.9065} \\
ATM & \textbf{15.0136} & 16.9099 & 15.8467 \\
ALL & \textbf{10.7919} & 11.6430 & 10.8316 \\
\hline
\multicolumn{4}{c}{\textbf{MSE}} \\
ITM & 1628.9130 & 1565.7580 & \textbf{1517.8892} \\
OTM & 189.8867 & 174.3626 & \textbf{131.9044} \\
ATM & \textbf{583.6524} & 634.4850 & 623.1860 \\
ALL & \textbf{415.4041} & 432.6046 & 405.8215 \\
\hline
\end{tabular}

\vspace{1em}
\noindent\parbox{0.9\linewidth}{\footnotesize \textit{Note:} This table reports the mean absolute error (MAE) and mean squared error (MSE) for all models. To compute the option price on a given day for all models, we first estimate the implied parameters using all options traded on the previous day, and then use these parameters to calculate the option price. PIV, BS, and HS denote the Pearson diffusion, Black--Scholes, and Heston models, respectively. ITM, OTM, and ATM refer to in-the-money, out-of-the-money, and at-the-money options. ALL represents the entire out-of-sample dataset. Across the entire out-of-sample dataset, the proportions of ITM, OTM, and ATM options are \(1.82\%\), \(47.56\%\), and \(50.62\%\), respectively. Bold values indicate the lowest MAE and MSE in each row.}
\end{table}

\begin{table}[H]
\centering
\caption{Model Performance under time-to-maturity category by implied approach.}
\label{tab:ttm_performance3}
\begin{tabular}{lccc}
\hline
 & \textbf{PIV} & \textbf{BS} & \textbf{HS} \\
\hline
\multicolumn{4}{c}{\textbf{MAE}} \\
A & \textbf{6.6631} & 8.3303 & 6.9006 \\
B & \textbf{12.1654} & 12.3763 & 13.4534 \\
C & 17.1369 & 16.8262 & \textbf{16.0479} \\
D & 23.0805 & 22.4659 & \textbf{18.8199} \\
E & 50.3616 & 53.0035 & \textbf{34.0092} \\
\hline
\multicolumn{4}{c}{\textbf{MSE}} \\
A & 204.2751 & 246.4082 & \textbf{196.3507} \\
B & 420.3412 & \textbf{404.1864} & 480.5811 \\
C & 681.9721 & \textbf{675.7133} & 735.1835 \\
D & 1248.3535 & 1257.3568 & \textbf{936.6570} \\
E & 6308.4326 & 6443.3622 & \textbf{2822.3595} \\
\hline
\end{tabular}

\vspace{1em}
\noindent\parbox{0.9\linewidth}
{\footnotesize \textit{Note:} This table presents the mean absolute error (MAE) and mean squared error (MSE) for all models by implied approach. To compute the option price on a given day for all models, we first estimate the implied parameters using all options traded on the previous day, and then use these parameters to calculate the option price. PIV, BS, and HS represent the Pearson diffusion, Black--Scholes, and Heston models, respectively. Categories A through E classify options based on time to expiry as follows: A (7 days or less), B (more than 7 to 15 days), C (more than 15 to 30 days), D (more than 30 to 60 days), and E (more than 60 to 90 days). Across the entire out-of-sample dataset, the proportions of option contracts corresponding to time-to-maturity categories A, B, C, D, and E are \(51.40\%\), \(28.06\%\), \(15.33\%\), \(4.76\%\), and \(0.45\%\), respectively. Bold values highlight the lowest MAE and MSE for each row.}
\end{table}

\begin{table}[H]
\centering
\caption{p-values of the Diebold-Mariano Test for implied approach.}
\label{tab:dm_test_pvalues}
\begin{tabular}{lcc}
\toprule
\multicolumn{3}{c}{\textbf{PIV vs HS}} \\
\midrule
         & \textbf{AE}       & \textbf{SE}       \\
ITM      & 0.9993            & 0.9607            \\
OTM      & 1.0000            & 1.0000            \\
ATM      & 2.20E$^{-16}$     & 0.0002            \\
ALL      & 0.2003            & 0.8884            \\
\addlinespace
\multicolumn{3}{c}{\textbf{PIV vs BS}} \\
\midrule
         & \textbf{AE}       & \textbf{SE}       \\
ITM      & 0.9988            & 0.9990            \\
OTM      & 1.0000            & 1.0000            \\
ATM      & 2.20E$^{-16}$     & 2.20E$^{-16}$     \\
ALL      & 2.20E$^{-16}$     & 3.88E$^{-14}$     \\
\bottomrule
\end{tabular}

\vspace{1em}
\begin{minipage}{0.9\textwidth}
\footnotesize
\textit{Note:} This table presents the p-values from the Diebold and Mariano test, used to assess the predictive accuracy of the PIV model relative to the HS (PIV vs HS) and BS (PIV vs BS) models. ITM, OTM, and ATM refer to in-the-money, out-of-the-money, and at-the-money options, respectively. ALL represents the entire out-of-sample dataset. The AE and SE columns report the p-values of the test based on the Absolute Error and Squared Error loss functions, respectively.
\end{minipage}
\end{table}

\section{Conclusion}\label{conclusion}

We propose a novel method for estimating call option prices by modeling the log return dynamics using Pearson diffusion processes, characterized by a linear drift and a quadratic squared diffusion coefficient. The advantage of this class of diffusion is its ability to capture the excess kurtosis and skewness observed in the return distribution. We also verify the Novikov condition, a key requirement for risk-neutral pricing, thereby ensuring that the model does not admit arbitrage opportunities. Further, we establish the existence of a unique strong solution for the stock price process under the risk-neutral measure. We apply our method to Nifty 50 index options and compare the results to the classical Black-Scholes and Heston stochastic volatility models using historical and implied approaches.

Under the historical approach, the proposed method consistently outperforms competing models across all rolling window sizes and evaluation metrics. These results hold across option categories defined by moneyness and time-to-maturity (except 30 to 60 days maturity options).

Under the implied approach, the PIV method demonstrates superior performance for at-the-money options, which account for over $50\%$ of the observations, as well as across the entire out-of-sample dataset. For the time-to-maturity category, the PIV method outperforms other methods for short-term maturity options (one-week and two-weeks maturity options), which together constitute more than $79\%$ of the dataset.  The Heston model exhibits better performance for long-term maturity options.

The findings based on historical and implied approaches show the practical relevance of our approach and suggest that it offers a more effective alternative for option pricing. \\

\textbf{Acknowledgment:} We gratefully acknowledge Professor Malay Bhattacharyya for introducing us to the Pearson diffusion processes. Barun Sarkar would like to acknowledge the seed grant of Indian Institute of Technology Madras. Suprio Bhar acknowledges the support of the Matrics grant MTR/2021/000517 from the Science and Engineering Research Board (Department of Science \& Technology, Government of India).

\appendix
\section{Appendix}\label{App}

\begin{proof}[Proof of Theorem \ref{localmart-discountStck}]
Girsanov's Theorem \ref{Girsanov-theorem} proposed a new Brownian motion $\widehat B_t, t \leq T$ under $\Q$ of the form
\[d\widehat{B_t}= u_t\, dt + dB_t,\, t \leq T.\]
From \eqref{eq:discounted_process_drift_zero},
\begin{align*}
d\left( e^{-rt} S_t\right)
& = -re^{-rt} S_t\, dt + e^{-rt} S_t \left\{ -\theta(R_t-\mu) +  \sigma^2\theta a \left(1+R_t^2 \right)\right\}\, dt  \\
& \quad + e^{-rt} S_t \sigma \sqrt{2\theta a(1+R_t^2)}\, \left(d\widehat{B_t} - u_t \, dt\right),
\end{align*}
and from \eqref{expression-for-ut}
\[u_t = \frac{-r  -\theta(R_t-\mu) +  \sigma^2\theta a \left(1+R_t^2 \right)}{ \sigma \sqrt{2\theta a(1+R_t^2)}}.
\]
The proposed probability measure $\Q$ is given by
\begin{equation}\label{risk-neutral-measure}
  d\mathbb{Q}=M_T\, d\mathbb{P},
\end{equation}
where
\begin{equation}\label{martingale-in-risk-neutral}
  M_T:= \exp\left( -\int_0^T u_s\, dB_s - \frac{1}{2}\int_0^T u^2_s\, ds \right).
\end{equation}
If we can show that $M_t,\, t \leq T$ is a $\mathbb{P}$-martingale, then $\mathbb{Q}$ becomes a probability measure. Under $\Q$,
\begin{equation}\label{newYt}
d\left( e^{-rt} S_t\right) =  e^{-rt} S_t \sigma \sqrt{2\theta a(1+R_t^2)}\, d\widehat{B_t}
\end{equation}
is a local martingale. This achieves our target provided we show that $M_t,\, t \leq T$ is a $\mathbb{P}$-martingale. Observe that from \eqref{expression-for-ut}
 \begin{align}\label{condndec24}
\begin{split}
u_t^2 &=  \left|\frac{-r  -\theta(R_t-\mu) +  \sigma^2\theta a \left(1+R_t^2 \right)}{ \sigma \sqrt{2\theta a(1+R_t^2)}} \right|^2 \\
& = \left|\frac{-r +\mu\theta -\theta R_t +  \sigma^2\theta a \left(1+R_t^2 \right)}{ \sigma \sqrt{2\theta a(1+R_t^2)}}\right|^2 \\
& \leq 3\left\{ \left| \frac{\mu\theta-r}{\sigma \sqrt{2\theta a(1+R_t^2)}}\right|^2 + \left|\frac{\theta R_t}{\sigma \sqrt{2\theta a(1+R_t^2)}} \right|^2 + \left|\sigma \sqrt{\frac{a\theta}{2}}\, \sqrt{1+R_t^2} \right|^2\right\} \\
& \leq 3\left\{\frac{(\mu\theta-r)^2}{2\sigma^2\theta a} + \frac{\theta^2}{2\sigma^2\theta a} + \frac{\sigma^2a\theta}{2} \left(1+R_t^2\right)\right\} \\
& \leq K(\mu,\theta,r,\sigma,a)\,   \left(1+R_t^2\right),
\end{split}
 \end{align}
where $K(\mu,\theta,r,\sigma,a) > 0$ is a constant depending only on $\mu,\theta,r,\sigma$ and $a$. Therefore for Novikov’s condition, using \eqref{condndec24},
\begin{align*}
&\Exp \left[\exp\left( \frac{1}{2} \int_0^T u^2_t\, dt   \right)\right]\\
&= \Exp \left[ \exp \left( \frac{1}{2}\int_0^T \left|
 \frac{ -r  -\theta(R_t-\mu) +  \sigma^2\theta a \left(1+R_t^2 \right)}{\sigma \sqrt{2\theta a(1+R_t^2)}}\right|^2 dt\right)\, \right] \\
 & \leq  \Exp \left[ \exp \left(\frac{1}{2} K(\mu,\theta,r,\sigma,a)\int_0^T  \left(1+R_t^2\right)  dt\right)\right].
\end{align*}
From \eqref{Rt-integrability} and \eqref{condndec24}, we know that
\[\pp\left( \int_0^T  \left(1+R_t^2\right) dt<\infty\right) =1.\]
Therefore,
\begin{equation}\label{integrabilityX}
\pp\left( \int_0^T  u_t^2 dt<\infty\right) =1.
\end{equation}
Following  \cite[Theorem 2.1]{MR2446690}, define
\[ \chi_N(t):=
\begin{cases}
 1,\ \ \text{if}\ \int_0^t  u_s^2 ds\leq N \\
 0,\ \text{otherwise}.
\end{cases}
     \]
Therefore, from \eqref{integrabilityX}
\[\lim_{N\to\infty}\pp\left( \int_0^T  u_t^2 dt\leq N\right) =1.\]
Then, from definition of $\chi_N$
\[\int_0^T\left( \chi_N(t)u_t\right)^2\, dt\leq N,\ \ \text{$P$-a.s.}\]
Now, define
\[M^N_t:= \exp\left( -\int_0^t \chi_N(s)u_s\, dB_s - \frac{1}{2}\int_0^t \left(\chi_N(s)u_s\right)^2\, ds \right).\]
Here, the Novikov condition is satisfied
\[\Exp \left[\exp\left(\frac{1}{2}\int_0^T \left(\chi_N(s)u_s\right)^2\, ds\right) \right]< \infty,\]
and consequently, $\{M^N_t\}_{t\in[0,T]}$, is an $\mathcal{L}^2$ exponential martingale \cite[Chapter III., Theorem 5.3]{MR1011252}. Here, $\mathcal{L}^2$ denotes the space of square integrable random variables. Define
\[d\mathbb{Q}^N:=M_T^N\, d\mathbb{P}\]
and note that $\Exp \left(M^N_T\right)=1$. Hence,
\begin{align*}
1 & = \Exp \left(M^N_T\right) \\
& =  \Exp \left(\chi_N(T)M^N_T\right) + \Exp \left((1-\chi_N(T))M^N_T\right) \\
& = \Exp \left(\chi_N(T)M_T\right) + \pp\left( \chi_N(T) = 0\right)
\end{align*}
Now, by the Monotone Convergence Theorem, $\lim_{N\to\infty}\Exp \left(\chi_N(T)M_T\right) = \Exp \left(M_T\right)$ and by
\eqref{integrabilityX}, \[\lim_{N\to\infty} \pp\left( \chi_N(T) = 0\right) = \lim_{N\to\infty} \pp\left( \int_0^T  u_t^2 dt > N\right) = 0.\]
Therefore, we conclude
\[ \lim_{N\to\infty}  \Exp \left(M^N_T\right) =  \Exp \left(M_T\right) = 1.\]
Thus, $M_t,\, t \leq T$ is a $\mathbb{P}$-martingale.
\end{proof}

\bibliographystyle{plain}

\bibliography{references}

\begin{thebibliography}{10}

\bibitem{bakshi1997empirical}
Gurdip Bakshi, Charles Cao, and Zhiwu Chen.
\newblock Empirical performance of alternative option pricing models.
\newblock {\em The Journal of finance}, 52(5):2003--2049, 1997.

\bibitem{bates1996jumps}
David~S Bates.
\newblock Jumps and stochastic volatility: Exchange rate processes implicit in
  deutsche mark options.
\newblock {\em The Review of Financial Studies}, 9(1):69--107, 1996.

\bibitem{bates1996testing}
DS~Bates.
\newblock Testing option pricing models. chapter 20 in handbook of statistics
  and finance 14: Statistical methods in finance, 1996.

\bibitem{beckers1980constant}
Stan Beckers.
\newblock The constant elasticity of variance model and its implications for
  option pricing.
\newblock {\em the Journal of Finance}, 35(3):661--673, 1980.

\bibitem{bhattacharyya2009maxvar}
Malay Bhattacharyya, Nityanand Misra, and Bharat Kodase.
\newblock Maxvar for non-normal and heteroskedastic returns.
\newblock {\em Quantitative Finance}, 9(8):925--935, 2009.

\bibitem{MR3363443}
Fischer Black and Myron Scholes.
\newblock The pricing of options and corporate liabilities.
\newblock {\em J. Polit. Econ.}, 81(3):637--654, 1973.

\bibitem{borland2002theory}
Lisa Borland.
\newblock A theory of non-gaussian option pricing.
\newblock {\em Quantitative Finance}, 2(6):415, 2002.

\bibitem{cervellera2017note}
Gian~P Cervellera and Marco~P Tucci.
\newblock A note on the estimation of a gamma-variance process: Learning from a
  failure.
\newblock {\em Computational Economics}, 49(3):363--385, 2017.

\bibitem{MR3131308}
Pao-Liu Chow and Rafail Khasminskii.
\newblock Almost sure explosion of solutions to stochastic differential
  equations.
\newblock {\em Stochastic Process. Appl.}, 124(1):639--645, 2014.

\bibitem{constantinides2007option}
George~M Constantinides, Jens~Carsten Jackwerth, and Stylianos Perrakis.
\newblock Option pricing: real and risk-neutral distributions.
\newblock {\em Handbooks in Operations Research and Management Science},
  15:565--591, 2007.

\bibitem{cox1975notes}
John Cox.
\newblock Notes on option pricing i: Constant elasticity of variance
  diffusions.
\newblock {\em Unpublished note, Stanford University, Graduate School of
  Business}, 1975.

\bibitem{cox1996constant}
John~C Cox.
\newblock The constant elasticity of variance option pricing model.
\newblock {\em Journal of Portfolio Management}, page~15, 1996.

\bibitem{cox1976valuation}
John~C Cox and Stephen~A Ross.
\newblock The valuation of options for alternative stochastic processes.
\newblock {\em Journal of financial economics}, 3(1-2):145--166, 1976.

\bibitem{diebold2002comparing}
Francis~X Diebold and Robert~S Mariano.
\newblock Comparing predictive accuracy.
\newblock {\em Journal of Business \& economic statistics}, 20(1):134--144,
  2002.

\bibitem{MR2446690}
Benedetta Ferrario.
\newblock Absolute continuity of laws for semilinear stochastic equations with
  additive noise.
\newblock {\em Commun. Stoch. Anal.}, 2(2):209--227, 2008.

\bibitem{MR2446729}
Julie~Lyng Forman and Michael S\o~rensen.
\newblock The {P}earson diffusions: a class of statistically tractable
  diffusion processes.
\newblock {\em Scand. J. Statist.}, 35(3):438--465, 2008.

\bibitem{MR3929676}
Steven~L. Heston.
\newblock A closed-form solution for options with stochastic volatility with
  applications to bond and currency options.
\newblock {\em Rev. Financ. Stud.}, 6(2):327--343, 1993.

\bibitem{hull2016options}
John~C Hull and Sankarshan Basu.
\newblock {\em Options, futures, and other derivatives}.
\newblock Pearson Education India, 2016.

\bibitem{MR1011252}
Nobuyuki Ikeda and Shinzo Watanabe.
\newblock {\em Stochastic differential equations and diffusion processes},
  volume~24 of {\em North-Holland Mathematical Library}.
\newblock North-Holland Publishing Co., Amsterdam; Kodansha, Ltd., Tokyo,
  second edition, 1989.

\bibitem{karagozoglu2022option}
Ahmet~K Karagozoglu.
\newblock Option pricing models: From black-scholes-merton to present.
\newblock {\em Journal of Derivatives}, 29(4), 2022.

\bibitem{MR1121940}
Ioannis Karatzas and Steven~E. Shreve.
\newblock {\em Brownian motion and stochastic calculus}, volume 113 of {\em
  Graduate Texts in Mathematics}.
\newblock Springer-Verlag, New York, second edition, 1991.

\bibitem{madan1998variance}
Dilip~B Madan, Peter~P Carr, and Eric~C Chang.
\newblock The variance gamma process and option pricing.
\newblock {\em Review of Finance}, 2(1):79--105, 1998.

\bibitem{MR496534}
Robert~C. Merton.
\newblock Theory of rational option pricing.
\newblock {\em Bell J. Econom. and Management Sci.}, 4:141--183, 1973.

\bibitem{merton1976option}
Robert~C Merton.
\newblock Option pricing when underlying stock returns are discontinuous.
\newblock {\em Journal of financial economics}, 3(1-2):125--144, 1976.

\bibitem{MR1708093}
Yuichi Nagahara.
\newblock The {PDF} and {CF} of {P}earson type {IV} distributions and the {ML}
  estimation of the parameters.
\newblock {\em Statist. Probab. Lett.}, 43(3):251--264, 1999.

\bibitem{MR2001996}
Bernt \O~ksendal.
\newblock {\em Stochastic differential equations}.
\newblock Universitext. Springer-Verlag, Berlin, sixth edition, 2003.
\newblock An introduction with applications.

\bibitem{pearson1895contributions}
Karl Pearson.
\newblock Contributions to the mathematical theory of evolution ii. skew
  variation in homogeneous material.
\newblock {\em Philosophical Transactions of the Royal Society of London.
  Series A}, 186:343--414, 1895.

\bibitem{tsallis1996anomalous}
Constantino Tsallis and Dirk~Jan Bukman.
\newblock Anomalous diffusion in the presence of external forces: Exact
  time-dependent solutions and their thermostatistical basis.
\newblock {\em Physical Review E}, 54(3):R2197, 1996.

\bibitem{vellekoop2007option}
Michel Vellekoop and Hans Nieuwenhuis.
\newblock On option pricing models in the presence of heavy tails.
\newblock {\em Quantitative finance}, 7(5):563--573, 2007.

\end{thebibliography}

\end{document}